\newtheorem{theorem}{Theorem}
\title{A New Approach to Linear/Nonlinear Distributed Fusion Estimation Problem}
\author{Bo Chen,~\IEEEmembership{Member,~IEEE},~Guoqiang Hu,~\IEEEmembership{Member,~IEEE},~Daniel W.C. Ho,~\IEEEmembership{Fellow,~IEEE},~Li Yu,~\IEEEmembership{Member,~IEEE}

\thanks{B. Chen and G. Hu are with the School of Electrical and Electronic Engineering, Nanyang Technological University, 639798 Singapore
(email: bchen@aliyun.com; gqhu@ntu.edu.sg).}
\thanks{D. W. C. Ho is with the Department of Mathematics, City University of Hong Kong, Hong Kong, 999077.
(email: madaniel@cityu.edu.hk).}
\thanks{L. Yu is with the College of Information Engineering, Zhejiang University of Technology, HangZhou 310023, China
(email: lyu@zjut.edu.cn).}}
\begin{document}

\markboth{Final Version}
{Shell \MakeLowercase{\textit{et al.}}: Bare Demo of IEEEtran.cls for Journals}
\maketitle

\begin{abstract}
In this paper, we study the distributed fusion estimation problem for linear time-varying systems and nonlinear systems with bounded noises, where the addressed noises do not provide any statistical information, and are unknown but bounded. When considering linear time-varying fusion systems with bounded noises, a new local Kalman-like estimator is designed such that the square error of the estimator is bounded as time goes to $\infty$. A novel constructive method is proposed to find an upper bound of fusion estimation error, then a convex optimization problem on the design of an optimal weighting fusion criterion is established in terms of linear matrix inequalities, which can be solved by standard software packages. Furthermore, according to the design method of linear time-varying fusion systems, each local nonlinear estimator is derived for nonlinear systems with bounded noises by using Taylor series expansion, and a corresponding distributed fusion criterion is obtained by solving a convex optimization problem. Finally, target tracking system and localization of a mobile robot are given to show the advantages and effectiveness of the proposed methods.
\end{abstract}

\begin{keywords}
Distributed fusion estimation; Nonlinear estimation; Linear time-varying systems; Stability analysis; Convex optimization; Bounded noises.
\end{keywords}


\section{Introduction}
Multi-sensor fusion estimation has been one of the most important focuses in the area of information fusion during the past two decades. Since estimation performance and reliability can be improved by the redundant information from multiple sensors, different fusion estimation methods have been found in many application fields such as target tracking and localization, guidance and navigation \cite{c1}, fault detection \cite{c2}, sensor networks \cite{ac2} and cyber-physical systems \cite{c3}. Generally, there exist two kinds of fusion estimation structures: centralized fusion structure and distributed fusion structure. Under the centralized fusion structure, all measurement data from different sensors are communicated to the fusion center (FC), and the design of centralized fusion estimator is equivalent to that of state estimator with single sensor. Though the centralized fusion estimation can provide optimal estimation performance, it has poor robustness and reliability when there are faulty sensors and FC. Under the distributed fusion structure, the measurements are first used to derive the local estimates at each sensor, and then these local estimates are sent to the FC to yield optimal or suboptimal state estimate by designing certain fusion criteria. Compared with the centralized fusion structure, the distributed fusion structure is generally more robust, reliable, and fault tolerant \cite{c4}. Therefore, most of existing works are focused on how to design the distributed fusion estimation algorithms.

Generally, the physical processes are modeled by linear or nonlinear dynamical systems, while the disturbance noises in multi-sensor fusion systems are considered as Gaussian or non-Gaussian disturbances. When considering the Gaussian white noise with known covariances, there mainly exist three different distributed fusion estimation methods: i) Optimal distributed fusion estimation strategies \cite{c5,c6,c7}; ii) Suboptimal distributed weighted fusion estimation methods \cite{c8,c9,c10}; iii) Suboptimal distributed covariance intersection fusion estimation methods \cite{c11,c12,c13,c14}. Notice that the assumption of Gaussian white noises may not be satisfied in practical systems, particularly, the accurate covariances may not be obtained in practical applications. To overcome this drawback, the energy-bounded noises, which do not require any statistical property of noises, have been considered for multi-sensor fusion systems, and different distributed ${H_\infty }$ fusion estimation methods were developed in \cite{c15,c16,c17}. Subsequently, when simultaneously considering the energy-bounded noises and Gaussian white noises with unknown covariances, the distributed mixed ${H_2}/{H_\infty }$ fusion estimation algorithms have been developed in \cite{c18,c19} for a class of networked fusion systems. Though the conditions of disturbance noises have been relaxed in \cite{c15,c16,c17,c18,c19}, the addressed fusion systems in \cite{c15,c16,c17,c18,c19} were time-invariant. Moreover, the energy-bounded noise $w(t)$ in the ${H_\infty }$ fusion framework must satisfy ${\lim _{t \to \infty }}w(t) = 0$, which may not be true in some practical systems (e.g., sensor's measurement noise generated from the external environment always exists). To make up for these shortages, a novel fusion method was developed in \cite{c20} to solve the networked fusion estimation problem, where a general framework was proposed in \cite{c20} to deal with state estimation problem under bounded noises. However, how to find the closest upper bounds and the most suitable optimization objective were not well solved in \cite{c20}, which still remains challenging.

It should be pointed out that the works in \cite{c5,c6,c7,c8,c9,c10,c11,c12,c13,c14,c15,c16,c17,c18,c19,c20} were concerned with the fusion estimation problem of linear systems. When considering the nonlinear systems with Gaussian white noises, the distributed fusion estimation algorithm was developed in \cite{c25} by using the extended Kalman filter (EKF) \cite{c26}, while the unscented information fusion filtering algorithm was derived in \cite{c27} by using the unscented Kalman filter (UKF) \cite{c28,c29}. Meanwhile, the fifth-degree ensemble iterated cubature square-root information filter was introduced in \cite{c31} to design nonlinear fusion estimation algorithm, while the support vector regression methodology was proposed in \cite{c32} to design nonlinear fusion strategy. Recently, different sequential fusion estimation methods for nonlinear systems were presented in \cite{c30,c33} based on the UKF. Notice that the fusion estimation methods in \cite{c25,c27,c31,c32,c30,c33} assumed that the system disturbances must be Gaussian white noises with known covariance. On the other hand, when considering non-Gaussian noises in nonlinear fusion estimation framework, the consensus and Rao-Blackwellized fusion particle filtering algorithms were designed in \cite{c34,c35}, where the probability density function was required to be known in advance. Meanwhile, the modified Kalman filtering methods in \cite{c40,c41} may also be used to solve nonlinear fusion estimation problem, where the statistical information of noises are required to be known in advance. As mentioned before, the disturbance noises in practical systems are always bounded, and the statistical property of noises is difficult to be accurately obtained in practical applications. However, the above methods are not applicable to this case, and few results are focused on the distributed fusion estimation problem for nonlinear systems with bounded noises.

Motivated by the above analysis, we shall study the distributed fusion estimation problem for linear time-varying systems and nonlinear systems with bounded noises. The main contributions of this paper are summarized as follows: i) For the linear time-varying systems, a new \emph{stable} local Kalman-like estimator, which is different from the estimator structure in \cite{c20}, is obtained by solving a convex optimization problem at each time step. Subsequently, by constructing a new upper bound of fusion estimation error, an optimal distributed fusion criterion is designed by solving a class of convex optimization problems; ii) Linearizing the nonlinear function using Taylor series expansion, the general nonlinear system reduces to linear time-varying systems, and the linearized errors can be viewed as bounded noises. Under this case, according to the design method of linear time-varying fusion systems, each local nonlinear estimator and a distributed fusion criterion are designed by establishing different convex optimization problems; iii) The proposed fusion estimation method in this paper does not require any statistical information of noises as compared with the classical Kalman fusion estimation methods, and this method can also be applicable to linear time-varying systems and nonlinear systems as compared with the existing $H_\infty $ fusion estimation methods. Notice that the solutions to the convex optimization problems in this paper can be directly obtained by the standard software packages. In the simulations, the advantages of the linear estimation method in this paper are shown by comparing with the state estimator in \cite{c20}, the Kalman filter \cite{c21},  and the ${H_\infty }$ filter in \cite{c22}, while the advantages of the nonlinear estimation method in this paper are shown by comparing with the classical EKF \cite{c26} and UKF \cite{c28,c29}.

\emph{Notations}: The superscript ``${\rm{T}}$'' represents the transpose, while ``$I$'' represents the identity matrix with appropriate dimension. $X>(<)0$ denotes a positive-definite (negative-definite) matrix, and ${\rm{diag}\{\cdot\}}$ stands for a block diagonal matrix. ${\lambda _{\max }}( \cdot )$ means the maximum eigenvalue of the corresponding matrix, while $||A|{|_2}$ is the 2-norm of matrix $A$. The symmetric terms in a symmetric matrix are denoted by ``$\ast$'', and ${\rm{col}}\{ {a_1}, \cdots ,{a_n}\}$ means a column vector whose elements are ${a_1}, \cdots ,{a_n}$. Moreover, if~${\tau _1} > {\tau _2},$~it will be specified that $\prod\nolimits_{\tau  = {\tau _1}}^{{\tau _2}} {F(\tau )}  = {I_m}$, where $F(\tau ) \in \mathbb{R}^{m \times m}$ represents a matrix function with respect to the variable $\tau$.

\section{Problem Statement}
Consider a nonlinear system described by the following state-space model:
\begin{eqnarray}
\begin{array}{c}
{\rm{x}}(t + 1) = {\rm{f}}({\rm{x}}(t)) + B(t)w(t)\;\;\;\;\;\;\;\;\;\;\;\;\;\;\;\;
\end{array}
\label {eq:1}\\
{y_i}(t) = {{\rm{g}}_i}({\rm{x}}(t)) + {B_i}(t)v_i(t)(i = 1,2, \cdots ,L)
\label {eq:2}
\end{eqnarray}
where $x(t) \in {{\rm{R}}^n}$ is the system state, ${y_i}(t) \in {{\rm{R}}^{{q_i}}}$ is the measured output from sensor $i$, and $L$ is the number of sensors. ${\rm{f}}(x(t)) \in {{\rm{R}}^{n \times 1}}$ and ${{\rm{g}}_i}(x(t)) \in {{\rm{R}}^{{q_i} \times 1}}$ are nonlinear vector functions that are assumed to be continuously differentiable, while $B(t)$ and ${B_i}(t)(i = 1,2, \cdots ,L)$ are time-varying bounded matrices with appropriate dimensions. $w(t)$ and $v_i(t)$ are \emph{bounded noises}, i.e.,
\begin{eqnarray}
{w^{\rm{T}}}(t)w(t) \le {\delta _w},{v_i^{\rm{T}}}(t)v_i(t) \le {\delta _{v_i}}
\label {eq:3}
\end{eqnarray}
where ${\delta _w}$ and ${\delta _{v_i}}$ are \emph{unknown}. At each sensor, based on the measurements $\{ {y_i}(1),{y_i}(2), \cdots ,{y_i}(t)\}$, the local state estimate (LSE) ${{{\rm{\hat x}}}_i}(t)$ for nonlinear systems (\ref{eq:1}--\ref{eq:2}) is given by:
\begin{eqnarray}
\left\{ \begin{array}{l}
 {\rm{\hat x}}_i^{\rm{p}}(t) = {\rm{f}}({{{\rm{\hat x}}}_i}(t - 1)) \\
 {{{\rm{\hat x}}}_i}(t) = {\rm{\hat x}}_i^{\rm{p}}(t) + {\rm K}_i^N(t)[{y_i}(t) - {{\rm{g}}_i}({\rm{\hat x}}_i^{\rm{p}}(t))] \\
 \end{array} \right.
\label {eq:4}
\end{eqnarray}
where ${\rm{\hat x}}_i^{\rm{p}}(t)$ denotes one-step prediction, and an optimal gain ${\rm K}_i^N(t)$ is to be designed in Section III.

When the nonlinear systems (\ref{eq:1}--\ref{eq:2}) are reduced to the following linear discrete time-varying systems:
\begin{eqnarray}
\begin{array}{c}
{\rm{x}}(t + 1) = A(t){\rm{x}}(t) + B(t)w(t)\;\;\;\;\;\;\;\;\;\;\;\;\;\;\;\;
\end{array}
\label {eq:5}\\
{y_i}(t) = {C_i}(t){\rm{x}}(t) + {B_i}(t)v_i(t)(i = 1,2, \cdots ,L)
\label {eq:6}
\end{eqnarray}
where $A(t)$ and ${C_i}(t)$ are time-varying matrices with appropriate dimensions. Then, each LSE ${{{\rm{\hat x}}}_i}(t)$ for linear systems (\ref{eq:5}--\ref{eq:6}) is given by the Kalman-like structure:
\begin{eqnarray}
\begin{array}{l}
 {{{\rm{\hat x}}}_i}(t) = A(t - 1){{{\rm{\hat x}}}_i}(t - 1) \\
 \;\;\;\;\;\;\;\;\;\;\;\;\;\;\;\;\;\;\;\;\;\; + {{\rm K}_i^L}(t)[{y_i}(t) - {C_i}(t)A(t - 1){{{\rm{\hat x}}}_i}(t - 1)] \\
 \end{array}
\label {eq:7}
\end{eqnarray}
where an optimal gain ${{\rm K}_i^L}(t)$ is to be designed in Section III.

Subsequently, based on the LSEs (\ref{eq:4}) or (\ref{eq:7}), the distributed fusion estimate (DFE) of ${\rm{x}}(t)$ is given by:
\begin{eqnarray}
{\rm{\hat x}}(t) = \sum\nolimits_{i = 1}^L {{\Omega _i}(t){{{\rm{\hat x}}}_i}(t)}
\label {eq:8}
\end{eqnarray}
where $\sum\nolimits_{i = 1}^L {{\Omega _i}(t)}  = I$, and optimal weighting matrices ${\Omega _1}(t), \cdots ,{\Omega _L}(t)$ will be designed in Section III. Consequently, the problems to be solved in this paper are described as follows:
\begin{itemize}
\item The first aim is to design optimal gains ${\rm K}_i^N(t)$ in (\ref{eq:4}) and ${\rm K}_i^L(t)$ in (\ref{eq:7}) such that an upper bound of the square error (SE) of the corresponding LSE ${{{\rm{\hat x}}}_i}(t)$ is minimal at each time, and the SE of ${{{\rm{\hat x}}}_i}(t)$ for linear systems is bounded as $t$ goes to $\infty$;
\item The second aim is to design optimal weighting matrices ${\Omega _1}(t), \cdots ,{\Omega _L}(t)$ in (\ref{eq:8}) such that an upper bound of the SE of the DFE ${\rm{\hat x}}(t)$ is minimum at each time.
\end{itemize}

\textbf{Remark 1}. When considering the linear time-varying systems with bounded noises, the LSE in \cite{c20} was given by
\begin{eqnarray}
\begin{array}{l}
 {{{\rm{\hat x}}}_i}(t) = A(t - 1){{{\rm{\hat x}}}_i}(t - 1) \\
 \;\;\;\;\;\;\;\;\;\;\; + {{\rm{K}}_i}(t)({y_i}(t - 1) - {C_i}(t - 1){{{\rm{\hat x}}}_i}(t - 1)) \\
 \end{array}
\label {eq:r1}
\end{eqnarray}
where ${{\rm{K}}_i}(t)$ is the estimator gain of (\ref{eq:r1}). The difference of the estimator structures between the LSE (\ref{eq:7}) and the LSE (\ref{eq:r1}) is that the LSE (\ref{eq:r1}) in \cite{c20} was designed based on the measurements $\{ {y_i}(1), \cdots ,{y_i}(t - 1)\} $, while the LSE (\ref{eq:4}) is designed based on the measurements $\{ {y_i}(1), \cdots ,{y_i}(t - 1),{y_i}(t)\}$. Notice that ${y_i}(t)$ can provide important information of $x(t)$, but the design of the LSE (\ref{eq:r1}) did not use the measurement ${y_i}(t)$. Thus, more available information on the state $x(t)$ is used in this paper to design the estimator. In this sense, under the same criterion of performance assessment, the estimation performance of the LSE (\ref{eq:7}) can be improved as compared with the LSE in \cite{c20}.

\textbf{Remark 2}. Compared with the Kalman fusion estimation algorithms in \cite{ac2,c3,c4,c5,c6,c7,c8,c9,c10,c11,c12,c13,c14,c25,c27,c31,c32,c30,c33}, the proposed fusion estimation algorithm in this paper does not require any statistical information of the noises. Since the Gaussian white noises are always bounded in a practical system, the fusion estimation algorithm in this paper is also applicable to the case of Gaussian white noises.

\textbf{Remark 3}. Compared with the ${H_\infty }$ fusion estimation algorithms in \cite{c15,c16,c17,c18,c19}, the proposed fusion estimation algorithm in this paper does not require that the noise $w(t)$ (or $v_i(t)$) is energy-bounded (i.e., ${\lim _{t \to \infty }}w(t)=0$), and is applicable to time-varying systems and nonlinear systems. Notice that the energy-bounded noise can be viewed as a special case of bounded noises. Moreover, the nonlinear fusion estimation methods based on Taylor series expansion cannot be obtained from the similar ideas in \cite{c15,c16,c17,c18,c19}, because the fusion estimation methods in \cite{c15,c16,c17,c18,c19} only dealt with linear time-invariant systems, but the linearized systems must be time-varying.

\section{Main Results}
\subsection{DFE Design for Linear Time-Varying Systems}
In this subsection, an optimal local estimator gain ${{\rm K}_i^L}(t)$ in (\ref{eq:7}) and optimal weighting matrices ${\Omega _1}(t), \cdots ,{\Omega _L}(t)$ in (\ref{eq:8}) for linear systems (\ref{eq:5}--\ref{eq:6}) will be presented in Theorem 1. Before deriving the result of Theorem 1, let us define
\begin{eqnarray}
\left\{ \begin{array}{l}
 {\rm{G}}_{{{\rm{K}}_i}}^L(t) \buildrel \Delta \over = I - {\rm{K}}_i^L(t){C_i}(t) \\
 B_{{f_i}}^L(t) \buildrel \Delta \over = [{\rm{G}}_{{{\rm{K}}_i}}^L(t)B(t - 1)\:\: - {\rm{K}}_i^L(t){B_i}(t)] \\
 \bar B_{{f_i}}^L(t) \buildrel \Delta \over = [{\rm{G}}_{{{\rm{K}}_i}}^L(t)B(t - 1)\;0 \cdots \: - {\rm{K}}_i^L(t){B_i}(t) \cdots 0] \\
 A_F^L(t) \buildrel \Delta \over = {\rm{diag}}\{ {\rm{G}}_{{{\rm{K}}_1}}^L(t)A(t - 1), \cdots ,{\rm{G}}_{{{\rm{K}}_L}}^L(t)A(t - 1)\}  \\
 B_F^L(t) \buildrel \Delta \over = {\rm{col}}\{ \bar B_{{f_1}}^L(t), \cdots ,\bar B_{{f_L}}^L(t)\}  \\
 \end{array} \right.
\label {eq:10}
\end{eqnarray}
\begin{theorem}
An optimal estimator gain ${{\rm K}_i^L}(t)$ in (\ref{eq:7}) can be obtained by solving the following convex optimization problem:
\begin{eqnarray}
\begin{array}{l}
 \mathop {\min }\limits_{{\vartheta _i}(t) > 0,{P_i}(t) > 0,{\Theta _i}(t) > 0,{{\rm K}_i^L}(t)} {\rm{Tr}}\{ {\Theta _i}(t)\}  \\
 {\rm{s}}{\rm{.t}}{\rm{.}}:\left\{ \begin{array}{l}
 \left[ {\begin{array}{*{20}{c}}
   { - I} & {{{\rm{G}}_{{{\rm K}_i}}^L}(t)A(t - 1)} & {{B_{{f_i}}^L}(t)}  \\
    *  & { - {P_i}(t)} & 0  \\
    *  &  *  & { - {\Theta _i}(t)}  \\
\end{array}} \right] < 0 \\
 {P_i}(t) - {\vartheta _i}(t)I < 0 \\
 {\vartheta _i}(t) < 1 \\
 \end{array} \right. \\
 \end{array}
\label {eq:11}
\end{eqnarray}
where ${\rm{G}}_{{{\rm{K}}_i}}^L(t)$ and $B_{{f_i}}^L(t)$ are defined in (\ref{eq:10}). Under this case, the SE of ${{{\rm{\hat x}}}_i}(t)$ will be bounded, i.e., there must exist a positive scalar ${p_i} > 0$ such that
\begin{eqnarray}
\mathop {\lim }\limits_{t \to \infty } {\rm{e}}_i^{\rm{T}}(t){{\rm{e}}_i}(t) < {p_i}
\label {eq:12}
\end{eqnarray}
Moreover, a group of optimal weighting matrices ${\Omega _1}(t), \cdots ,{\Omega _L}(t)$ in (\ref{eq:8}) for linear systems (\ref{eq:5}--\ref{eq:6}) can be obtained by solving the following convex optimization problem:
\begin{eqnarray}
\begin{array}{l}
 \mathop {\min }\limits_{\Omega (t),\Upsilon (t),P(t) > 0,\Theta (t) > 0} {\rm{Tr}}\{ P(t)\}  + {\rm{Tr}}\{ \Theta (t)\}  \\
 {\rm{s}}{\rm{.t}}{\rm{.}}:\left[ {\begin{array}{*{20}{c}}
   { - I} & {\Omega (t){A_F^L}(t)} & {\Omega (t){B_F^L}(t)}  \\
    *  & { - P(t)} & { - \Upsilon (t)}  \\
    *  &  *  & { - \Theta (t)}  \\
\end{array}} \right] < 0 \\
 \end{array}
\label {eq:13}
\end{eqnarray}
where $\Omega (t) \buildrel \Delta \over = [{\Omega _1}(t), \cdots ,{\Omega _{L - 1}}(t),I - \sum\nolimits_{i = 1}^{L - 1} {{\Omega _i}(t)} ]$, while ${{A_F^L}(t)}$ and ${{B_F^L}(t)}$ are defined in (\ref{eq:10}).
\end{theorem}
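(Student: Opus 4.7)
The plan is to reduce both statements to a single Schur-complement manipulation applied to the estimation-error recursion. The first step is to derive the local error dynamics explicitly: substituting (\ref{eq:5})--(\ref{eq:6}) into (\ref{eq:7}) and setting $e_i(t) \triangleq {\rm x}(t) - {\rm \hat x}_i(t)$, the cross terms $y_i(t)$ cancel and one obtains $e_i(t) = {\rm G}_{{\rm K}_i}^L(t) A(t-1) e_i(t-1) + B_{f_i}^L(t)\xi_i(t-1)$ with $\xi_i(t-1) \triangleq {\rm col}\{w(t-1),v_i(t)\}$ bounded by (\ref{eq:3}). Writing $M \triangleq {\rm G}_{{\rm K}_i}^L(t)A(t-1)$ and $N \triangleq B_{f_i}^L(t)$, the squared error is $e_i^{\rm T}(t)e_i(t) = \|M e_i(t-1) + N\xi_i(t-1)\|^2$, a quadratic form whose coefficient matrix is $[M,N]^{\rm T}[M,N]$.

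The central step is a Schur-complement calculation: the first LMI of (\ref{eq:11}) is exactly the condition $[M,N]^{\rm T}[M,N] < {\rm diag}\{P_i(t),\Theta_i(t)\}$, obtained by taking the Schur complement of the $-I$ block. Feasibility therefore yields the one-step bound $e_i^{\rm T}(t)e_i(t) < e_i^{\rm T}(t-1)P_i(t)e_i(t-1) + \xi_i^{\rm T}(t-1)\Theta_i(t)\xi_i(t-1)$, and since ${\rm K}_i^L(t)$ enters ${\rm G}_{{\rm K}_i}^L(t)$ and $B_{f_i}^L(t)$ linearly, the LMI is convex in the decision variables. Combining with $P_i(t) < \vartheta_i(t)I$ and $\vartheta_i(t) < 1$ together with (\ref{eq:3}) gives the contractive recursion $\|e_i(t)\|^2 \le \vartheta_i(t)\|e_i(t-1)\|^2 + \lambda_{\max}(\Theta_i(t))(\delta_w + \delta_{v_i})$. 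Iterating under uniform bounds $\vartheta_i(t) \le \bar\vartheta_i < 1$ and $\lambda_{\max}(\Theta_i(t))(\delta_w+\delta_{v_i}) \le \bar c_i$ yields $\|e_i(t)\|^2 \le \bar\vartheta_i^{\,t}\|e_i(0)\|^2 + \bar c_i/(1-\bar\vartheta_i)$, which establishes (\ref{eq:12}) with $p_i$ chosen slightly larger than the asymptotic constant; minimizing ${\rm Tr}\{\Theta_i(t)\}$ tightens this one-step bound, justifying the objective in (\ref{eq:11}).

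The fusion statement follows the same template at the aggregate level. Stacking $E(t) \triangleq {\rm col}\{e_1(t),\ldots,e_L(t)\}$ produces $E(t) = A_F^L(t)E(t-1) + B_F^L(t)W(t-1)$, with $W(t-1) \triangleq {\rm col}\{w(t-1),v_1(t),\ldots,v_L(t)\}$, and the embedding $\Omega_L = I - \sum_{i=1}^{L-1}\Omega_i$ built into $\Omega(t)$ enforces $\sum_i \Omega_i(t) = I$, so the DFE error reads $e(t) = \Omega(t)A_F^L(t)E(t-1) + \Omega(t)B_F^L(t)W(t-1)$. Applying the same Schur-complement argument to the $-I$ block of (\ref{eq:13}) dominates $e^{\rm T}(t)e(t)$ by $E^{\rm T}(t-1)P(t)E(t-1) + 2E^{\rm T}(t-1)\Upsilon(t)W(t-1) + W^{\rm T}(t-1)\Theta(t)W(t-1)$; the slack matrix $\Upsilon(t)$ is left free precisely to make the inequality feasible with small diagonal blocks, and minimizing ${\rm Tr}\{P(t)\}+{\rm Tr}\{\Theta(t)\}$ tightens the bound. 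Convexity is immediate because, with the local gains already fixed from the first step, $A_F^L(t)$ and $B_F^L(t)$ are data, and $\Omega(t)$ enters (\ref{eq:13}) affinely.

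The main obstacle is keeping the formulation jointly convex in the gain and in the bound parameters while retaining a meaningful upper bound on the SE. A naive minimization of $e_i^{\rm T}(t)e_i(t)$ would be quadratic in ${\rm K}_i^L(t)$ through the term $M^{\rm T}M$; the Schur-complement detour is exactly what converts this quadratic object into an LMI that is \emph{affine} in ${\rm K}_i^L(t)$ via ${\rm G}_{{\rm K}_i}^L(t) = I - {\rm K}_i^L(t)C_i(t)$, and the same device handles the fusion step. A secondary subtlety is that feasibility of the first LMI alone only yields a one-step inequality; the extra constraints $P_i(t) < \vartheta_i(t)I$ with $\vartheta_i(t) < 1$ are what upgrade a local contraction into the asymptotic estimate (\ref{eq:12}), and I would treat them as the key design freedom enabling the stability claim.
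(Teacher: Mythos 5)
Your proposal follows essentially the same route as the paper's own proof: the error recursion (\ref{eq:a6}), the Schur-complement equivalence of the first LMI in (\ref{eq:11}) with the quadratic one-step bound in $P_i(t)$ and $\Theta_i(t)$, the constraints $P_i(t)<\vartheta_i(t)I$ and $\vartheta_i(t)<1$ iterated into the asymptotic bound (\ref{eq:12}), the trace-based justification of the objectives, and the stacked system (\ref{eq:22}) with the slack $\Upsilon(t)$ for the fusion LMI (\ref{eq:13}). If anything, you are slightly more explicit than the paper in flagging that the iteration to (\ref{eq:12}) needs $\vartheta_i(t)$ (and the noise/$\Theta_i$ terms) uniformly bounded away from the critical values, a point the paper's proof passes over silently.
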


\begin{proof}
Define ${{\rm{e}}_i}(t) \buildrel \Delta \over = {\rm {x(t)}} - {{{\rm{\hat x}}}_i}(t)$ and $\xi_i (t - 1) \buildrel \Delta \over = {\rm{col}}\{ w(t - 1),v_i(t)\}$. Then, the estimation error ${{\rm{e}}_i}(t)$ is given by:
\begin{eqnarray}
{{\rm{e}}_i}(t) = {\rm{G}}_{{{\rm K}_i}}^L(t)A(t - 1){{\rm{e}}_i}(t - 1) + B_{{f_i}}^L(t)\xi_i (t - 1)
\label {eq:a6}
\end{eqnarray}
where ${\rm{G}}_{{{\rm{K}}_i}}^L(t)$ and $B_{{f_i}}^L(t)$ are defined in (\ref{eq:10}).

Next, the following performance index is introduced:
\begin{eqnarray}
\begin{array}{l}
 J_i(t) \buildrel \Delta \over = {\rm{e}}_i^{\rm{T}}(t){{\rm{e}}_i}(t) - {\rm{e}}_i^{\rm{T}}(t - 1){P_i}(t){{\rm{e}}_i}(t - 1) \\
  \;\;\;\;\;\;\;\;\;\;\;- {\xi_i ^{\rm{T}}}(t - 1){\Theta _i}(t)\xi_i (t - 1) \\
 \end{array}
\label {eq:14}
\end{eqnarray}
where ${P_i}(t) > 0$ and ${\Theta _i}(t)>0$. Then, it follows from (\ref{eq:a6}) that
\begin{eqnarray}
{J_i}(t) = {\left[ {\begin{array}{*{20}{c}}
   {{{\rm{e}}_i}(t - 1)}  \\
   {\xi_i (t - 1)}  \\
\end{array}} \right]^{\rm{T}}}\underbrace {\left[ {\begin{array}{*{20}{c}}
   {{Z_{i1}}(t)} & {{Z_{i2}}(t)}  \\
    *  & {{Z_{i3}}(t)}  \\
\end{array}} \right]}_{{Z_i}(t)}\left[ {\begin{array}{*{20}{c}}
   {{{\rm{e}}_i}(t - 1)}  \\
   {\xi_i (t - 1)}  \\
\end{array}} \right]
\label {eq:15}
\end{eqnarray}
where ${Z_{i1}}(t) \buildrel \Delta \over = {A^{\rm{T}}}(t - 1){{\rm{[G}}_{{{\rm K}_i}}^L(t)]^{\rm{T}}}{\rm{G}}_{{{\rm K}_i}}^L(t)A(t - 1) - {P_i}(t)$, ${Z_{i2}}(t) \buildrel \Delta \over = {A^{\rm{T}}}(t - 1){{\rm{[G}}_{{{\rm K}_i}}^L(t)]^{\rm{T}}}{B_{{f_i}}^L}(t)$ and ${Z_{i3}}(t) \buildrel \Delta \over = {[B_{{f_i}}^L(t)]^{\rm{T}}}B_{{f_i}}^L(t) - {\Theta _i}(t)$. According to the Schur complement lemma \cite{c23}, the first inequality in (\ref{eq:11}) is equivalent to ${Z_i}(t) < 0$. This means that ${J_i}(t) < 0$ under the first inequality in (\ref{eq:11}), and thus one has
\begin{eqnarray}
\begin{array}{l}
 {\rm{e}}_i^{\rm{T}}(t){{\rm{e}}_i}(t) < {\rm{e}}_i^{\rm{T}}(t - 1){P_i}(t){{\rm{e}}_i}(t - 1) \\
 \;\;\;\;\;\;\;\;\;\;\;\;\;\;\;\;\;\;\;\;\;\;\;\;\;\;\;\;\;\;\;\;\;\; + {\xi_i ^{\rm{T}}}(t - 1){\Theta _i}(t)\xi_i (t - 1) \\
 \end{array}
\label {eq:17}
\end{eqnarray}
When the inequality ${P_i}(t) - {\vartheta _{i}}(t)I < 0$ in (\ref{eq:11}) holds, one has ${\lambda _{\max }}({P_i}(t)) < {\vartheta _{i}}(t)$. Then, combining (\ref{eq:17}) yields that
\begin{eqnarray}
\begin{array}{l}
 {\rm{e}}_i^{\rm{T}}(t){{\rm{e}}_i}(t) < {\vartheta _i}(t){\rm{e}}_i^{\rm{T}}(t - 1){{\rm{e}}_i}(t - 1) \\
 \;\;\;\;\;\;\;\;\;\;\;\;\;\;\;\;\;\;\;\;\;\;\;\;\;\;\;\;\;\;\;\;\;\;\;\; + {\xi_i ^{\rm{T}}}(t - 1){\Theta _i}(t)\xi_i (t - 1) \\
 \end{array}
\label {eq:19}
\end{eqnarray}
Thus, it is derived from (\ref{eq:19}) that
\begin{eqnarray}
\begin{array}{l}
 {\rm{e}}_i^{\rm{T}}(t){{\rm{e}}_i}(t) < \left( {\prod\nolimits_{\kappa  = 0}^{t - 1} {{\vartheta _i}(t - \kappa )} } \right){\rm{e}}_i^{\rm{T}}(0){{\rm{e}}_i}(0) \\
 \;\;\;\;\;\;\;\;\;\;\;\;\;\;\;\; + \sum\nolimits_{\kappa  = 0}^{t - 1} {\left\{ {\left( {\prod\nolimits_{\tau  = 0}^{\kappa  - 1} {{\vartheta _i}(t - \tau )} } \right)} \right.}  \\
 \;\;\;\;\;\;\;\;\;\;\;\;\;\;\;\; \times {\xi_i ^{\rm{T}}}(t - \kappa  - 1){\Theta _i}(t - \kappa )\xi_i (t - \kappa  - 1)\}  \\
 \end{array}
\label {eq:20}
\end{eqnarray}
When the condition ``${\vartheta _i}(t) < 1$'' in (\ref{eq:11}) holds, one has
\begin{eqnarray}
\mathop {\lim }\limits_{t \to \infty } \prod\nolimits_{\kappa  = 0}^{t - 1} {{\vartheta _i}(t - \kappa )}  = 0,\mathop {\lim }\limits_{\kappa  \to \infty } \prod\nolimits_{\tau  = 0}^{\kappa  - 1} {{\vartheta _i}(t - \tau )}  = 0
\label {eq:21}
\end{eqnarray}
Then, it is concluded from (\ref{eq:20}--\ref{eq:21}) that $\mathop {\lim }\limits_{t \to \infty } {\rm{e}}_i^{\rm{T}}(t){{\rm{e}}_i}(t)$ is bounded.

Notice that ${\xi_i ^{\rm{T}}}(t - 1){\Theta _i}(t)\xi_i (t - 1) \le {\lambda _{\max }}(\xi_i (t - 1){\xi_i ^{\rm{T}}}(t - 1)){\rm{Tr}}\{ {\Theta _i}(t)\}$, and thus it follows from (\ref{eq:17}) that
\begin{eqnarray}
\begin{array}{l}
 {\rm{e}}_i^{\rm{T}}(t){{\rm{e}}_i}(t) < {\vartheta _i}(t){\rm{e}}_i^{\rm{T}}(t - 1){{\rm{e}}_i}(t - 1) \\
 \;\;\;\;\;\;\;\;\;\;\;\;\;\;\;\;\;\;\;\;\;\;\;\; + {\lambda _{\max }}(\xi_i (t - 1){\xi_i ^{\rm{T}}}(t - 1)){\rm{Tr}}\{ {\Theta _i}(t)\}  \\
 \end{array}
\label {eq:18}
\end{eqnarray}
In this case, the right term of (\ref{eq:18}) can be viewed as an upper bound of ${\rm{e}}_i^{\rm{T}}(t){{\rm{e}}_i}(t)$ at each time. Though the estimation error ${{\rm{e}}_i}(t)$ is generated from the the initial error ${{\rm{e}}_i}(0)$ and the bounded noises $\xi (0), \cdots ,\xi (t - 1)$, it is concluded from (\ref{eq:19}) and (\ref{eq:20}) that when the third condition in (\ref{eq:11}) holds, the estimation error is independent of the initial value. Based on the above consideration, ``$\min \;{\rm{Tr}}\{ {\Theta _i}(t)\}$'' is proposed to be the optimization objective when minimizing this upper bound at each time.

Define ${{\rm{e}}_0}(t) \buildrel \Delta \over = x(t) - {\rm{\hat x}}(t)$. Then, the fusion estimation error is calculated by:
\begin{eqnarray}
{{\rm{e}}_0}(t) = \sum\nolimits_{i = 1}^L {{\Omega _i}(t){{\rm{e}}_i}(t)}
\label {eq:a9}
\end{eqnarray}
where ${{{\rm{e}}_i}(t)}$ is determined by (\ref{eq:a6}). To design a group of optimal weighting matrices in (\ref{eq:8}), the following fusion system is constructed from the estimation error (\ref{eq:a6}) and the fusion estimation error (\ref{eq:a9}):
\begin{eqnarray}
\left\{ \begin{array}{l}
 {{\rm{e}}_F}(t) = {A_F^L}(t){{\rm{e}}_F}(t - 1) + {B_F^L}(t)\xi (t - 1) \\
 {{\rm{e}}_0}(t) = \Omega (t){{\rm{e}}_F}(t) \\
 \end{array} \right.
\label {eq:22}
\end{eqnarray}
where ${{\rm{e}}_F}(t) \buildrel \Delta \over = {\rm{col}}\{ {{\rm{e}}_1}(t), \cdots ,{{\rm{e}}_L}(t)\}$ and $\xi (t) \buildrel \Delta \over = {\rm{col}}\{ w(t),{v_1}(t+1),{v_2}(t+1), \cdots ,{v_L}(t+1)\}$, while ${A_F^L}(t)$, ${B_F^L}(t)$ and $\Omega (t)$ are defined in (\ref{eq:10}) and (\ref{eq:13}), respectively. Define $\bar \xi (t) \buildrel \Delta \over = {\rm{col}}\{ {{\rm{e}}_F}(t),\xi (t)\}$, then we introduce three matrices ${\Upsilon (t)}$, ${P(t)}>0$ and ${\Theta (t)}>0$ such that
\begin{eqnarray}
{\rm{e}}_0^{\rm{T}}(t){{\rm{e}}_0}(t) < {{\bar \xi }^{\rm{T}}}(t - 1)\left[ {\begin{array}{*{20}{c}}
   {P(t)} & {\Upsilon (t)}  \\
    *  & {\Theta (t)}  \\
\end{array}} \right]\bar \xi (t - 1)
\label {eq:23}
\end{eqnarray}
To guarantee that the right term in (\ref{eq:23}) is an upper bound of ${\rm{e}}_0^{\rm{T}}(t){{\rm{e}}_0}(t)$, the following inequality must be satisfied:
\begin{eqnarray}
{{\bar \xi }^{\rm{T}}}(t - 1)\underbrace {\left[ {\begin{array}{*{20}{c}}
   {{\Lambda _1}(t)} & {{\Lambda _2}(t)}  \\
    *  & {{\Lambda _3}(t)}  \\
\end{array}} \right]}_{\Lambda (t)}\bar \xi (t - 1) < 0
\label {eq:24}
\end{eqnarray}
where ${\Lambda _1}(t) \buildrel \Delta \over = {[A_F^L(t)]^{\rm{T}}}{\Omega ^{\rm{T}}}(t)\Omega (t){A_F^L}(t) - P(t)$, ${\Lambda _2}(t) \buildrel \Delta \over = {[A_F^L(t)]^{\rm{T}}}{\Omega ^{\rm{T}}}(t)\Omega (t){B_F^L}(t)-{\Upsilon (t)}$ and ${\Lambda _3}(t) \buildrel \Delta \over = {[B_F^L(t)]^{\rm{T}}}{\Omega ^{\rm{T}}}(t)\Omega (t){B_F^L}(t) - \Theta (t)$. Under this case, the first inequality in (\ref{eq:13}) is equivalent to $\Lambda (t) < 0$ according to the Schur complement lemma. Notice that ${\rm{Tr}}\left\{ {\left[ {\begin{array}{*{20}{c}}
   {P(t)} & {\Upsilon (t)}  \\
    *  & {\Theta (t)}  \\
\end{array}} \right]} \right\} = {\rm{Tr}}\{ P(t) + \Theta (t)\}$, and thus it follows from (\ref{eq:23}) that
\vspace{-6pt}
\begin{eqnarray}
{\rm{e}}_0^{\rm{T}}(t){{\rm{e}}_0}(t) < {\lambda _{\max }}(\bar \xi (t - 1){{\bar \xi }^{\rm{T}}}(t - 1)){\rm{Tr}}\{ P(t) + \Theta (t)\}\;\;
\label {eq:26}
\end{eqnarray}
Since $\bar \xi (t - 1)$ cannot be changed by the fusion system (\ref{eq:22}), $\min \;{\rm{Tr}}\{ P(t) + \Theta (t)\}$ can be chosen as the optimization objective when designing optimal weighting matrices. This completes the proof.
\end{proof}

Based on Theorem 1, the computation procedures for the DFE ${\rm{\hat x}}(t)$ of linear systems (\ref{eq:5}--\ref{eq:6}) are summarized as follows:
\begin{algorithm}
\caption{}
\begin{algorithmic}[1]\label{algo:1}
\STATE Determine the local estimator gains ${{\rm K}_i^L}(t)(i = 1, \cdots ,L)$ by solving the optimization problem (\ref{eq:11});
\STATE Determine the optimal weighting matrices ${\Omega _i}(t)(i = 1, \cdots ,L)$ by solving the optimization problem (\ref{eq:13});
\STATE Calculate linear LSEs ${\rm{\hat x}}_i(t)(i = 1, \cdots ,L)$ by (\ref{eq:7});
\STATE Calculate the DFE ${\rm{\hat x}}(t)$ by (\ref{eq:8});
\STATE Return to Step 1 and implement Steps 1--4 for calculating ${\rm{\hat x}}(t+1)$.
\end{algorithmic}
\end{algorithm}

\textbf{Remark 4}. Different from the constructing method of the upper bounds in \cite{c20}, the matrices ${\Theta _i}(t)$ in (\ref{eq:14}), $\Upsilon (t)$ and $\Theta (t)$ in (\ref{eq:23}) are introduced to construct the upper bounds of the SEs of the LSEs and DFE in this paper. Notice that the estimation performance of the LSEs and DFE is strongly dependent on the constructed upper bounds. When establishing the convex optimization problems in Theorem 1, these introduced matrices can increase the search space, and thus may lead to better solutions. On the other hand, the norm inequality was used in \cite{c20} to find an upper bound of the SEs and determine an optimization objective by a further inequality relaxation. Different from the relaxation inequality in \cite{c20}, an upper bound of the SEs in this paper is constructed by the trace inequalities (see (\ref{eq:18}) and (\ref{eq:26})), and the corresponding optimization objectives are also determined by the same trace inequalities. Notice that the constructed optimization objectives do not require further inequality relaxation. From the above analysis, the upper bound after optimizing the objective in Theorem 1 is closer to the real SE at each time. Therefore, the conservatism of the estimation method in Theorem 1 can be less than the estimation method in \cite{c20} because of these introduced new matrices, different relaxation inequalities and optimization objectives.
\vspace{-4pt}
\subsection{DFE Design for Nonlinear Systems}
Based on the DFE design method of linear systems in Subsection III.A, an optimal nonlinear estimator gain ${\rm K}_i^N(t)$ in (\ref{eq:4}) and optimal weighting matrices ${\Omega _1}(t), \cdots ,{\Omega _L}(t)$ in (\ref{eq:8}) for nonlinear systems (\ref{eq:1}--\ref{eq:2}) will be presented in Theorem 2.

\begin{theorem}
Define
\begin{eqnarray}
\left\{ {\begin{array}{*{20}{l}}
   {{A_{J_i}}(t - 1) = {{\left. {\frac{{\partial {\rm{f}}({\rm{x}}(t - 1))}}{{\partial {\rm{x}}(t - 1)}}} \right|}_{{\rm{x}}(t - 1) = {{{\rm{\hat x}}}_i}(t - 1)}}}  \\
   {{C_{{J_i}}}(t) = {{\left. {\frac{{\partial {{\rm{g}}_i}({\rm{x}}(t))}}{{\partial {\rm{x}}(t)}}} \right|}_{{\rm{x}}(t) = {\rm{f}}({{{\rm{\hat x}}}_i}(t - 1))}}}  \\
\end{array}} \right.
\label {eq:29}
\end{eqnarray}
An optimal estimator gain ${\rm K}_i^N(t)$ in (\ref{eq:4}) can be obtained by solving the following convex optimization problem:
\begin{eqnarray}
\begin{array}{l}
 \mathop {\min }\limits_{{\Pi _i}(t) > 0,{\Upsilon _i}(t) > 0,{{\rm M}_i}(t) > 0,{\eta _i}(t) > 0,{\Psi _i}(t),{\rm K}_i^N(t)} {\rm{Tr}}\{ {\Upsilon _i}(t) + {{\rm M}_i}(t)\}  \\
 {\rm{s}}{\rm{.t}}{\rm{.:}}\left\{ \begin{array}{l}
 \left[ {\begin{array}{*{20}{c}}
   { - I} & {{\rm{G}}_{{{\rm K}_i}}^N(t){A_{{J_i}}}(t - 1)} & {{{\rm X}_{i2}}(t)} & {{{\rm X}_{i3}}(t)}  \\
    *  & { - {\Pi _i}(t)} & 0 & 0  \\
    *  &  *  & { - {\Upsilon _i}(t)} & { - {\Psi _i}(t)}  \\
    *  &  *  &  *  & { - {{\rm M}_i}(t)}  \\
\end{array}} \right] < 0 \\
 {\Pi _i}(t) - {\eta _i}(t)I < 0 \\
 {\eta _i}(t) \le 1 \\
 \end{array} \right. \\
 \end{array}
\label {eq:30}
\end{eqnarray}
where
\begin{eqnarray}
\left\{ \begin{array}{l}
 {\rm{G}}_{{{\rm K}_i}}^N(t) = I - {\rm K}_i^N(t){C_{{J_i}}}(t) \\
 {{\rm X}_{i2}}(t) \buildrel \Delta \over = {\rm{G}}_{{{\rm K}_i}}^N(t)B(t - 1)\\
 {{\rm X}_{i3}}(t) \buildrel \Delta \over = - {\rm K}_i^N(t){B_i}(t)\\
 \end{array} \right.
\label {eq:31}
\end{eqnarray}
Then, a group of optimal weighting matrices ${\Omega _1}(t), \cdots ,{\Omega _L}(t)$ in (\ref{eq:8}) for nonlinear systems (\ref{eq:1}--\ref{eq:2}) can be obtained by solving the following convex optimization problem:
\begin{eqnarray}
\begin{array}{l}
 \mathop {\min }\limits_{\Pi (t) > 0,{\rm{M}}(t){\rm{ > 0,}}\Psi (t),\Omega (t)} {\rm{Tr}}\{ \Pi (t) + {\rm{M}}(t)\}  \\
 {\rm{s}}{\rm{.t}}{\rm{.}}:\left\{ {\left[ {\begin{array}{*{20}{c}}
   { - I} & {\Omega (t)A_F^N(t)} & {\Omega (t)B_F^N(t)}  \\
    *  & { - \Pi (t)} & { - \Psi (t)}  \\
    *  &  *  & { - {\rm{M}}(t)}  \\
\end{array}} \right] < 0} \right. \\
 \end{array}
\label {eq:32}
\end{eqnarray}
where ${A_F^N(t)}$, ${B_F^N(t)}$ and ${\Omega (t)}$ are defined by
\begin{eqnarray}
\left\{ \begin{array}{l}
 B_{{f_i}}^N(t) \buildrel \Delta \over = [{\rm{G}}_{{{\rm K}_i}}^N(t)B(t - 1)\;\; - {\rm K}_i^N(t){B_i}(t)] \\
 A_F^N(t) \buildrel \Delta \over = {\rm{diag}}\{ {\rm{G}}_{{{\rm K}_1}}^N(t){A_{{J_1}}}(t - 1), \\
 \;\;\;\;\;\;\;\;\;\;\;\;\;\;\;\;\;\;\;\;\;\;\;\;\;\;\;\;\;\;\;\;\;\;\;\;\;\cdots,{\rm{G}}_{{{\rm K}_L}}^N(t){A_{{J_L}}}(t - 1)\}  \\
 B_F^N(t) \buildrel \Delta \over = {\rm{diag}}\{ B_{{f_1}}^N(t), \cdots ,B_{{f_L}}^N(t)\}  \\
 \Omega (t) \buildrel \Delta \over = [{\Omega _1}(t), \cdots ,{\Omega _{L - 1}}(t),I - \sum\nolimits_{i = 1}^{L - 1} {{\Omega _i}(t)} ] \\
 \end{array} \right.
\label {eq:33}
\end{eqnarray}
\end{theorem}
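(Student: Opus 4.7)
The plan is to transcribe the three-step argument that already powered Theorem~1 (derive an error recursion, upper bound the squared error with a quadratic performance index, then repeat for the fused error), adding one ingredient at the front: a Taylor linearisation of $\mathrm{f}$ and $\mathrm{g}_i$. First I would expand $\mathrm{f}(\mathrm{x}(t-1))$ around $\hat{\mathrm{x}}_i(t-1)$ and $\mathrm{g}_i(\mathrm{x}(t))$ around $\mathrm{f}(\hat{\mathrm{x}}_i(t-1))$ using the Jacobians in (\ref{eq:29}), absorb the higher-order remainders into a single linearisation residual $\phi_i(t-1)$, and substitute into (\ref{eq:1})--(\ref{eq:2}) and (\ref{eq:4}). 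This produces an error recursion of the form
\[
\mathrm{e}_i(t) = \mathrm{G}_{\mathrm{K}_i}^N(t) A_{J_i}(t-1) \mathrm{e}_i(t-1) + \mathrm{X}_{i2}(t)w(t-1) + \mathrm{X}_{i3}(t)v_i(t) + \phi_i(t-1).
\]
Because $\mathrm{x}$, the noises, and the local estimates are bounded, $\phi_i(t-1)$ is itself bounded and may be treated as an auxiliary (unknown-magnitude) bounded disturbance; this is the single step that makes the linear machinery of Theorem~1 reusable.

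Second, I would introduce the augmented performance index
\[
J_i(t) \triangleq \mathrm{e}_i^{\mathrm{T}}(t)\mathrm{e}_i(t) - \mathrm{e}_i^{\mathrm{T}}(t-1)\Pi_i(t)\mathrm{e}_i(t-1) - \begin{bmatrix} \xi_i(t-1) \\ \phi_i(t-1) \end{bmatrix}^{\mathrm{T}} \begin{bmatrix} \Upsilon_i(t) & \Psi_i(t) \\ * & \mathrm{M}_i(t) \end{bmatrix} \begin{bmatrix} \xi_i(t-1) \\ \phi_i(t-1) \end{bmatrix},
\]
with $\xi_i(t-1) = \mathrm{col}\{w(t-1),v_i(t)\}$. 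Substituting the error recursion, writing $J_i(t)$ as a quadratic form in $(\mathrm{e}_i(t-1),\xi_i(t-1),\phi_i(t-1))$, and applying the Schur complement to enforce $J_i(t)<0$ yields exactly the $4\times 4$ LMI in (\ref{eq:30}); the $-I$ pivot comes from the squared linear output, while the cross block $-\Psi_i(t)$ is there precisely because $\phi_i$ and $\xi_i$ both depend on $w(t-1)$ and $v_i(t)$ and hence must be allowed to be correlated. The second and third conditions of (\ref{eq:30}) then reproduce the telescoping argument (\ref{eq:17})--(\ref{eq:21}) verbatim and force $\mathrm{e}_i^{\mathrm{T}}(t)\mathrm{e}_i(t)$ to be bounded, while the trace/eigenvalue step that led to (\ref{eq:18}) identifies $\mathrm{Tr}\{\Upsilon_i(t)+\mathrm{M}_i(t)\}$ as the natural object to minimise.

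Third, for the fusion part I would stack $\mathrm{e}_F(t)\triangleq\mathrm{col}\{\mathrm{e}_1(t),\ldots,\mathrm{e}_L(t)\}$, lift the per-sensor recursions into
\[
\mathrm{e}_F(t) = A_F^N(t)\mathrm{e}_F(t-1) + B_F^N(t)\xi(t-1) + \phi_F(t-1),
\]
with $A_F^N,B_F^N$ as in (\ref{eq:33}) and $\phi_F=\mathrm{col}\{\phi_1,\ldots,\phi_L\}$, and then rerun the construction (\ref{eq:22})--(\ref{eq:26}) with $(\Pi(t),\Psi(t),\mathrm{M}(t))$ playing the roles of $(P(t),\Upsilon(t),\Theta(t))$. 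Treating $(\xi(t-1),\phi_F(t-1))$ as the combined disturbance, the Schur complement converts the analogue of (\ref{eq:23}) into the LMI (\ref{eq:32}), and the same trace/eigenvalue step fixes $\mathrm{Tr}\{\Pi(t)+\mathrm{M}(t)\}$ as the optimisation objective.

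The principal obstacle is the careful handling of the linearisation residual $\phi_i(t-1)$. Unlike the genuine noise $\xi_i$, it depends on both the true state and the current estimate, so one must (i) justify that it is bounded in order to legitimately treat it as a disturbance; (ii) keep the cross-terms $\Psi_i(t)$ and $\Psi(t)$ rather than diagonalising, because $\phi_i$ and $\xi_i$ share $w(t-1),v_i(t)$ and are therefore generically correlated; and (iii) ensure that the quadratic-form bound remains linear in the design variable $\mathrm{K}_i^N(t)$, which is what dictates the precise block partition of (\ref{eq:30}). Once these three points are settled, the remainder of the proof is a transcription of the arguments already used for Theorem~1.
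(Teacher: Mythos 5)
Your overall strategy (linearize, treat the remainder as an extra bounded disturbance, rerun the Theorem~1 machinery) is the right spirit, but your key step does not produce the LMIs stated in the theorem, so as written the proof has a genuine mismatch. The paper does \emph{not} carry the Taylor remainders as a separate residual $\phi_i(t-1)$ entering through an identity channel. After obtaining the exact error recursion
${\rm{\tilde x}}_i(t)={\rm{G}}_{{\rm{K}}_i}^N(t)A_{J_i}(t-1){\rm{\tilde x}}_i(t-1)+{\rm{G}}_{{\rm{K}}_i}^N(t)[B(t-1)w(t-1)+\Delta_{\rm f}]-{\rm K}_i^N(t)[B_i(t)v_i(t)+\Delta_{{\rm g}_i}]$,
it \emph{absorbs} the noise plus the high-order terms into modified bounded noises $\tilde w_i(t-1)$ and $\tilde v_i(t)$ that enter through the same matrices $B(t-1)$ and $B_i(t)$ (its equation (38)); this recursion then has exactly the form of the linear case, and the $4\times4$ LMI (\ref{eq:30}) is just Theorem~1's $3\times3$ LMI with the single disturbance weight $\Theta_i(t)$ partitioned into $\left[\begin{smallmatrix}\Upsilon_i(t)&\Psi_i(t)\\ *&{\rm M}_i(t)\end{smallmatrix}\right]$ over the two channels $\tilde w_i,\tilde v_i$, so that ${\rm X}_{i2}(t)$ multiplies $\tilde w_i$ and ${\rm X}_{i3}(t)$ multiplies $\tilde v_i$. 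In your construction the quadratic form is taken in $({\rm e}_i(t-1),\xi_i(t-1),\phi_i(t-1))$, so the Schur complement gives a matrix whose $(1,3)$ block is $[{\rm X}_{i2}(t)\;\,{\rm X}_{i3}(t)]$ and whose $(1,4)$ block is $I$, with block widths $\dim w+\dim v_i$ and $n$; this is structurally different from (\ref{eq:30}), whose third and fourth block columns have widths $\dim w$ and $\dim v_i$. Your claim that it ``yields exactly the $4\times4$ LMI in (\ref{eq:30})'' is therefore false, and your interpretation of $\Psi_i(t)$ (correlation between $\xi_i$ and $\phi_i$) does not match its role in the theorem (cross-weight between the modified process and measurement noise channels).

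The same mismatch propagates to the fusion step: with $B_F^N(t)={\rm{diag}}\{B_{f_1}^N(t),\cdots,B_{f_L}^N(t)\}$ as defined in (\ref{eq:33}), the stacked disturbance must be ${\rm{col}}\{\tilde w_1,\tilde v_1,\cdots,\tilde w_L,\tilde v_L\}$ (each sensor carries its \emph{own} effective process noise, precisely because the linearization error is sensor-dependent), so your equation ${\rm e}_F(t)=A_F^N(t){\rm e}_F(t-1)+B_F^N(t)\xi(t-1)+\phi_F(t-1)$ with the common $\xi(t-1)$ of Theorem~1 is not even dimensionally consistent, and keeping $\phi_F$ separate would again add a fourth block column absent from (\ref{eq:32}). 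Your route could be made rigorous as an \emph{alternative} estimator design (and it avoids the paper's implicit modeling step that the remainders can be represented through the $B(t-1)$ and $B_i(t)$ channels), but it proves different LMIs than the ones in the statement; to prove Theorem~2 as stated you must perform the paper's lumping into $\tilde w_i,\tilde v_i$ first and then transcribe Theorem~1 with the partitioned weight matrices.
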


\begin{proof}
Define ${\rm{\tilde x}}_i^{\rm{p}}(t)\; \buildrel \Delta \over = {\rm{f}}({\rm{x}}(t - 1)) - {\rm{f}}({{{\rm{\hat x}}}_i}(t - 1)) + B(t - 1)w(t - 1)$ and ${{{\rm{\tilde x}}}_i}(t) \buildrel \Delta \over = {\rm{x}}(t) - {{{\rm{\hat x}}}_i}(t)$. Then, it follows from (\ref{eq:4}) that
\begin{eqnarray}
\begin{array}{l}
 {{{\rm{\tilde x}}}_i}(t) = {\rm{\tilde x}}_i^{\rm{p}}(t) - {\rm K}_i^N(t)[{{\rm{g}}_i}({\rm{x}}(t)) - {{\rm{g}}_i}({\rm{\hat x}}_i^{\rm{p}}(t)) + {B_i}(t){v_i}(t)] \\
 \end{array}
\label {eq:35}
\end{eqnarray}
Meanwhile, by expanding ${\rm{f}}({\rm{x}}(t - 1))$ and ${{\rm{g}}_i}({\rm{x}}(t))$ in Taylor series about ``${{{\rm{\hat x}}}_i}(t - 1)$
'' and ``${\rm{\hat x}}_i^{\rm{p}}(t)$'', respectively, one has
\begin{eqnarray}
\left\{ \begin{array}{l}
 {\rm{f}}({\rm{x}}(t - 1)) = {\rm{f}}({{{\rm{\hat x}}}_i}(t - 1)) + {A_{J_i}}(t - 1){{{\rm{\tilde x}}}_i}(t - 1) \\
 \;\;\;\;\;\;\;\;\;\;\;\;\;\;\;\;\;\;\;\; + {\Delta _{\rm{f}}}({\rm{\tilde x}}_i^2(t - 1)) \\
 {{\rm{g}}_i}({\rm{x}}(t)) = {{\rm{g}}_i}({\rm{\hat x}}_i^{\rm{p}}(t)) + {C_{{J_i}}}(t){\rm{\tilde x}}_i^{\rm{p}}(t)  + {\Delta _{{{\rm{g}}_i}}}({[{\rm{\tilde x}}_i^{\rm{p}}(t)]^2}) \\
 \end{array} \right.
\label {eq:36}
\end{eqnarray}
where ${{A_{{J_i}}}(t - 1)}$ and ${{C_{{J_i}}}(t)}$ are defined by (\ref{eq:29}), while ${\Delta _{\rm{f}}}({\rm{\tilde x}}_i^2(t - 1))$ and ${\Delta _{{{\rm{g}}_i}}}({[{\rm{\tilde x}}_i^{\rm{p}}(t)]^2})$ represent the high-order terms of the Taylor series expansion.

According to (\ref{eq:36}), the nonlinear estimation error system (\ref{eq:35}) is equivalent to:
\begin{eqnarray}
\begin{array}{l}
 {{{\rm{\tilde x}}}_i}(t) = {\rm{G}}_{{{\rm K}_i}}^N(t){A_{J_i}}(t - 1){{{\rm{\tilde x}}}_i}(t - 1) \\
 \;\;\;\;\;\;\;\;\;\;\; + {\rm{G}}_{{{\rm K}_i}}^N(t)[B(t - 1)w(t - 1) + {\Delta _{\rm{f}}}({\rm{\tilde x}}_i^2(t - 1))] \\
 \;\;\;\;\;\;\;\;\;\;\; - {\rm K}_i^N(t)[{B_i}(t){v_i}(t) + {\Delta _{{{\rm{g}}_i}}}({[{\rm{\tilde x}}_i^{\rm{p}}(t)]^2})] \\
 \end{array}
\label {eq:37}
\end{eqnarray}
where ${\rm{G}}_{{{\rm K}_i}}^N(t)$ is defined by (\ref{eq:31}). Notice that ${\Delta _{\rm{f}}}({\rm{\tilde x}}_i^2(t - 1))$ and ${\Delta _{{{\rm{g}}_i}}}({[{\rm{\tilde x}}_i^{\rm{p}}(t)]^2})$ in (\ref{eq:37}) are unknown noises, while $w(t-1)$ and ${v_i}(t)$ are also unknown noises. Under this case, the terms $B(t - 1){{\tilde w}_i}(t - 1)$ and ${B_i}(t){{\tilde v}_i}(t)$ are introduced to model the affection factors caused by unknown noises. Then, (\ref{eq:37}) can be written as:
\begin{eqnarray}
\begin{array}{l}
 {{{\rm{\tilde x}}}_i}(t) = {\rm{G}}_{{{\rm K}_i}}^N(t){A_{{J_i}}}(t - 1){{{\rm{\tilde x}}}_i}(t - 1) \\
  \;\;\;\;\;\;\;\;+ {\rm{G}}_{{{\rm K}_i}}^N(t)[B(t - 1){{\tilde w}_i}(t - 1)] - {\rm K}_i^N(t)[{B_i}(t){{\tilde v}_i}(t)] \\
 \end{array}
\label {eq:38}
\end{eqnarray}
Since the form of (\ref{eq:38}) is the same as that of (\ref{eq:a6}), the optimization problems (\ref{eq:30}) and (\ref{eq:32}) in Theorem 2 can be obtained by the similar proof in Theorem 1, and the detailed derivation is omitted due to the page limitation.
\end{proof}

Based on Theorem 2, the computation procedures for the DFE ${\rm{\hat x}}(t)$ of nonlinear systems (\ref{eq:1}--\ref{eq:2}) are summarized as follows:
\begin{algorithm}
\caption{}
\begin{algorithmic}[1]\label{algo:1}
\STATE Calculate the matrices ${{A_{{J_i}}}(t - 1)}$ and ${{C_{{J_i}}}(t)}$ by (\ref{eq:29});
\STATE Determine the local estimator gains ${\rm K}_i^N(t)(i = 1, \cdots ,L)$ by solving the optimization problem (\ref{eq:30});
\STATE Determine the optimal weighting matrices ${\Omega _i}(t)(i = 1, \cdots ,L)$ by solving the optimization problem (\ref{eq:32});
\STATE Calculate nonlinear LSEs ${\rm{\hat x}}_i(t)(i = 1, \cdots ,L)$ by (\ref{eq:4});
\STATE Calculate the nonlinear DFE ${\rm{\hat x}}(t)$ by (\ref{eq:8});
\STATE Return to Step 1 and implement Steps 1--5 for calculating ${\rm{\hat x}}(t+1)$.
\end{algorithmic}
\end{algorithm}

\textbf{Remark 5}. The optimization problems (\ref{eq:11}), (\ref{eq:13}), (\ref{eq:30}) and (\ref{eq:32}) are established in terms of linear matrix inequalities, and thus they can be directly solved by the function ``\emph{mincx}'' of MATLAB LMI Toolbox \cite{c23}. On the other hand, it can be concluded from (\ref{eq:a9}) that when each linear LSE is designed by (\ref{eq:11}), the SE of the DFE for the linear time-varying systems (\ref{eq:5}--\ref{eq:6}) must be bounded at each time.

\section{Simulation Examples}
\subsection {Target Tracking System}
Consider a maneuvering target which is monitored by two sensors, and define the state vector $x(t)$ by $x(t) \buildrel \Delta \over = {\rm{col}}\{ s(t),\dot s(t)\}$, where $s(t)$ is the target's position, and $\dot s(t)$ is the target's velocity. Then, the dynamical process of the target's position and velocity can be modeled by\cite{c1}:
\begin{eqnarray}
x(t + 1) = \left[ {\begin{array}{*{20}{c}}
   1 & {f_s(t)}  \\
   0 & 1  \\
\end{array}} \right]x(t) + \left[ {\begin{array}{*{20}{c}}
   {0.5{f_s^2}(t)}  \\
   {f_s(t)}  \\
\end{array}} \right]w(t)
\label {eq:s1}
\end{eqnarray}
where $f_s(t)$ is the time-varying sampling period, and $w(t)$ is the process noise. Then, each sensor's measurement is modeled by
\begin{eqnarray}
{y_i}(t) = {C_i}(t)x(t) + {B_i}(t)v(t)
\label {eq:s2}
\end{eqnarray}
where $v(t)$ is the measurement noise, and ${C_1}(t) = [0.5\;\;1],{C_2}(t) = [1\;0]$, ${B_1}(t) = 1.2\cos (f_s(t)),{B_2}(t) = 2.0\sin (f_s(t))$. In the simulation, three types of disturbance noises $w(t)$ and $v(t)$ in (\ref{eq:s1}) and (\ref{eq:s2}) will be considered:
\begin{itemize}
\item \textbf{Type I:} When $f_s(t) = 0.5$, $w(t)$ and $v(t)$ are the \emph{energy-bounded noises} given by $w(t) = (2 + 0.2\cos (t))\exp ( - t/9)$ and $v(t)= 0.8\sin (t)\exp ( - t/6)$.

\item \textbf{Type II:} When $f_s(t) = 0.5 + 0.2\sin (t)$, $w(t)$ and $v(t)$ are the uncorrected \emph{Gaussian white noises} with covariances ${Q_w} = 1.8$ and ${Q_v} = 0.5$;
\item \textbf{Type III:} When $f_s(t) = 0.5 + 0.2\sin (t)$, $w(t)$ and $v(t)$ are the \emph{bounded noises} given by $w(t) = \cos (t) - 0.5$ and $v(t) = 0.7\sin (t) - 0.3$.
\end{itemize}
Notice that, for the system (\ref{eq:s1}--\ref{eq:s2}), each LSE can be obtained by the ${H_\infty }$ filter in \cite{c22} under Type I, the well-known Kalman filtering method (see \cite{c21}) can be used to design each LSE under Type II, and each LSE can be obtained by Theorem 1 in \cite{c20} under Type III.
\begin{figure}[thpb]
\centering
\includegraphics[height=6.5cm, width=8.5cm]{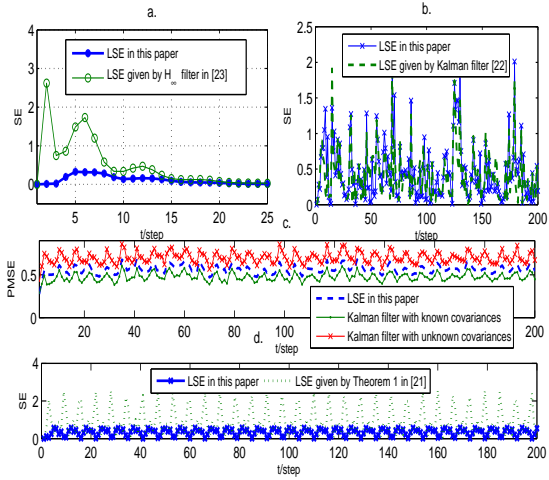}
 \caption{According to the first sensor measurements, the corresponding LSE is designed by using different methods: (a): Under the Type I, SEs of the ${H_\infty }$ filter in \cite{c22} and the LSE in this paper; (b): Under the Type II, SEs of the Kalman filter \cite{c21} and the LSE in this paper; (c): Under the Type II, PMSEs of the LSE in this paper and the Kalman filters with known covariances and unknown covariances; (d): Under the Type III, SEs of the LSE in \cite{c20} and the LSE in this paper.}
 \label{fig1}
 \end{figure}

To demonstrate the advantages of the designed estimation algorithm, the estimation performances of the first LSEs are shown in Fig.1 by using different estimation methods. Since energy-bounded noise is a special case of bounded noises, the LSE under Type I can also be obtained by Theorem 1 in this paper. It is seen from Fig.1(a) that the estimation precision of the LSE in this paper is higher than that of the ${H_\infty }$ filtering in \cite{c22}. Meanwhile, the Gaussian white noise is always bounded in a practical system, thus it can be viewed as a class of bounded noises. Under this case, the designed LSE in this paper can be applicable to Type II. Then, it is seen from Fig.1(b) that the estimation precision of the LSE given by Theorem 1 is close to that of the Kalman filter \cite{c21}. Due to the random noises, the estimation performance of the LSE is assessed by its mean square error (MSE), and the Monte Carlo method is used to approach the theory MSE. Then, the practical MSEs (PMSEs) of the LSE in this paper and the Kalman filter with accurate/inaccurate covariances are shown in Fig.1(c). It is seen from this figure that when the covariances are known, the estimation performance of Kalman filter is better than that of the LSE in this paper. This is because the Kalman filter is designed in the linear minimum variance sense at each time step. On the other hand, when the covariances are unknown or inaccurately known, the estimation precision of the LSE in this paper is higher than that of the Kalman filter. This is because the Kalman filter is required to know the covariance. In contrast, the LSE designed by this paper is not required to know the statistical information of noises. The above discussion implies that the designed LSE in this paper is more robust, and is applicable to a more general case.

Furthermore, when considering the bounded noises, Fig.1(d) shows that the estimation performance of the LSE in this paper is better than that of the LSE in \cite{c20}. This is because more available information is used to design estimator under Kalman-like structure (see Remark 1), and new upper bound and optimization objective constructed in Theorem 1 can reduce the conservatism (see Remark 4). Moreover, when considering the second measurement equation (i.e., $C_2(t)=[1\;0]$), the optimization problem in Theorem 1 of \cite{c20} is unsolvable.

\subsection{Mobile Robot Localization}
Consider the localization of an unicycle mobile robot operating in planar environments. Let ${p_r}(t) \buildrel \Delta \over = {\rm{col}}\{ {s_x}(t),{s_y}(t)\}$ denote the robot's position in X-Y plan, while $\theta(t)$ is to define the angular orientation. Then, the motion model of the robot is given by\cite{c37}:
\begin{eqnarray}
\left\{ \begin{array}{l}
 {s_x}(t + 1) = {s_x}(t) - \frac{{{{\hat u}_p}}}{{{{\hat u}_r}}}\left( {\sin \theta (t) - \sin (\theta (t) + {T_0}{{\hat u}_r})} \right) \\
 {s_y}(t + 1) = {s_y}(t) + \frac{{{{\hat u}_p}}}{{{{\hat u}_r}}}\left( {\cos \theta (t) - \cos (\theta (t) + {T_0}{{\hat u}_r})} \right) \\
 \theta (t + 1) = \theta (t) + {T_0}{{\hat u}_r} + {T_0}{w_\theta }(t) \\
 {{\hat u}_p} = {u_p} + {w_p}(t) \\
 {{\hat u}_r} = {u_r} + {w_r}(t) \\
 \end{array} \right.
\label {eq:s4}
\end{eqnarray}
where ${T_0}$ is the sampling period, ${w_\theta }(t)$ is the additional rotational noise; ${u_p}$ is the motion command to control the translational velocity, while ${u_r}$ is the motion command to control the rotational velocity. As pointed out in \cite{c37}, \emph{robot motion is subject to noise in reality}, i.e., the motion commands ${u_p}$ and ${u_r}$ may be changed by the unpredictable disturbances. Then, the true velocity control input ${{\hat u}_p}$ (or ${{\hat u}_r}$) equals the commanded velocity plus some small, additive noise ${w_p}(t)$ (or ${w_r}(t)$). Notice that the motion commands ${u_p}$ and ${u_r}$ are constant and known for the robot model (\ref{eq:s4}), and thus the model (\ref{eq:s4}) can be written as:
\begin{eqnarray}
{\rm{x}}(t + 1) = {\rm{f}}({\rm{x}}(t)) + \Gamma w(t)
\label {eq:s5}
\end{eqnarray}
where ${\rm{x}}(t) \buildrel \Delta \over = {\rm{col}}\{ {s_x}(t),{s_y}(t),\theta (t)\}$, $\Gamma  \buildrel \Delta \over = {\rm{diag}}\{ 1,1,{T_0}\}$, $w(t) \buildrel \Delta \over = {\rm{col}}\{ {w_1}(t),{w_2}(t),{w_3}(t)\}$, and
\begin{eqnarray}
\left\{ \begin{array}{l}
 {\rm{f}}({\rm{x}}(t)) = \left[ \begin{array}{l}
 {s_x}(t) - \frac{{{u_p}}}{{{u_r}}}\left( {\sin \theta (t) - \sin (\theta (t) + {T_0}{u_r})} \right) \\
 {s_y}(t) + \frac{{{u_p}}}{{{u_r}}}\left( {\cos \theta (t) - \cos (\theta (t) + {T_0}{u_r})} \right) \\
 \theta (t) + {T_0}{u_r} \\
 \end{array} \right] \\
 {w_1}(t) = \left( {\frac{{{u_p}}}{{{u_r}}} - \frac{{{{\hat u}_p}}}{{{{\hat u}_r}}}} \right)\sin \theta (t) + \frac{{{{\hat u}_p}}}{{{{\hat u}_r}}}\sin (\theta (t) + {T_0}{{\hat u}_r}) \\
 \;\;\;\;\;\;\;\;\; - \frac{{{u_p}}}{{{u_r}}}\sin (\theta (t) + {T_0}{u_r}) \\
 {w_2}(t) = \left( {\frac{{{{\hat u}_p}}}{{{{\hat u}_r}}} - \frac{{{u_p}}}{{{u_r}}}} \right)\cos \theta (t) + \frac{{{u_p}}}{{{u_r}}}\cos (\theta (t) + {T_0}{u_r}) \\
 \;\;\;\;\;\;\;\;\; - \frac{{{{\hat u}_p}}}{{{{\hat u}_r}}}\cos (\theta (t) + {T_0}{{\hat u}_r}) \\
 {w_3}(t) = {w_r}(t) + {w_\theta }(t) \\
 \end{array} \right.
\label {eq:s6}
\end{eqnarray}
Since ${w_1}(t)$ and ${w_2}(t)$ are dependent on the state $\theta (t)$, they are state-dependent noises for the system (\ref{eq:s5}).

In X-Y plan, it is considered that four known points, denoted as $({s_{{x_i}}},{s_{{y_i}}})(i = 1,2,3,4)$, are chosen as the landmarks. Then, the distance from the robot's planner Cartesian coordinates $({s_x}(t),{s_y}(t))$ to each landmark $({s_{{x_i}}},{s_{{y_i}}})$ can be expressed as follows:
\begin{eqnarray}
{d_i}(t) = \sqrt {{{({s_{{x_i}}} - {s_x}(t))}^2} + {{({s_{{y_i}}} - {s_y}(t))}^2}}
\label {eq:s7}
\end{eqnarray}
The azimuth ${\varphi _i}(t)$ at time $t$ can be related to the current system state variables ${s_x}(t)$, ${s_y}(t)$ and $\theta (t)$ as follows:
\begin{eqnarray}
{\varphi _i}(t) = \theta (t) - \arctan \left( {\frac{{{s_{{y_i}}} - {s_y}(t)}}{{{s_{{x_i}}} - {s_x}(t)}}} \right)
\label {eq:s8}
\end{eqnarray}
Both the distance ${d_i}(t)$ and ${\varphi _i}(t)$ are treated as the measurements. Furthermore, when considering the unpredicted disturbances, the measurement equations for the robotic system (\ref{eq:s4}) can be written as follows:
\begin{eqnarray}
\left\{ \begin{array}{l}
 {y_1}(t) = \left[ {\begin{array}{*{20}{c}}
   {{{\rm{g}}_1}({\rm{x}}(t))}  \\
   {{{\rm{g}}_2}({\rm{x}}(t))}  \\
\end{array}} \right] + \left[ {\begin{array}{*{20}{c}}
   {{D_1}} & 0  \\
   0 & {{D_2}}  \\
\end{array}} \right]\left[ {\begin{array}{*{20}{c}}
   {{v_1}(t)}  \\
   {{v_2}(t)}  \\
\end{array}} \right] \\
 {y_2}(t) = \left[ {\begin{array}{*{20}{c}}
   {{{\rm{g}}_3}({\rm{x}}(t))}  \\
   {{{\rm{g}}_4}({\rm{x}}(t))}  \\
\end{array}} \right] + \left[ {\begin{array}{*{20}{c}}
   {{D_3}}  \\
   {{D_4}}  \\
\end{array}} \right]{v_3}(t) \\
 \end{array} \right.
\label {eq:s9}
\end{eqnarray}
where $v_1(t)$, $v_2(t)$ and $v_3(t)$ are the measurement noises, and
\begin{eqnarray}
\left\{ \begin{array}{l}
 {{\rm{g}}_i}({\rm{x}}(t)) = {\rm{col}}\{ {d_i}(t),{\varphi _i}(t)\} (i = 1,2,3,4) \\
 {D_1} = {\rm{diag}}\{ 0.5,0.3\} ,{D_2} = {\rm{diag}}\{ 0.3,0.5\}  \\
 {D_3} = {\rm{diag}}\{ 0.2,0.6\} ,{D_4} = {\rm{diag}}\{ 0.5,0.7\}  \\
 \end{array} \right.
\label {eq:s10}
\end{eqnarray}
Then, based on the sensor measurements (\ref{eq:s9}), the localization of this mobile robot can be realized by using different nonlinear estimation methods. On the other hand, by using Taylor series expansions, the linearized matrices ${A_{J_j}}(t)$ and $C_{J_j}(t)(j=1,2)$ for the nonlinear vector functions ${\rm{f}}({\rm{x}}(t))$ in (\ref{eq:s5}) and ${{\rm{g}}_i}({\rm{x}}(t))$ in (\ref{eq:s9}) near the point ${x^ * } \in {{\rm{R}}^3}$ can be expressed as follows:
\begin{eqnarray}
\left\{ \begin{array}{l}
 {A_{{J_j}}}(t) = {\left[ {\begin{array}{*{20}{c}}
   1 & 0 & { - \frac{{{u_p}}}{{{u_r}}}\cos {\theta (t)} + \frac{{{u_p}}}{{{u_r}}}\cos ({\theta (t)} + {T_0}{u_r})}  \\
   0 & 1 & { - \frac{{{u_p}}}{{{u_r}}}\sin {\theta (t)} + \frac{{{u_p}}}{{{u_r}}}\sin ({\theta (t)} + {T_0}{u_r})}  \\
   0 & 0 & 1  \\
\end{array}} \right]_{{\rm{x}}(t) = {{\rm{x}}^ * }}} \\
 {D_{{J_i}}}(t) = {\left[ {\begin{array}{*{20}{c}}
   {\frac{{ - {{\tilde s}_{{x_i}}}(t)}}{{\sqrt {\tilde s_{{x_i}}^2(t) + \tilde s_{{y_i}}^2(t)} }}} & {\frac{{ - {{\tilde s}_{{y_i}}}(t)}}{{\sqrt {\tilde s_{{x_i}}^2(t) + \tilde s_{{y_i}}^2(t)} }}} & 0  \\
   {\frac{{ - {{\tilde s}_{{y_i}}}(t)}}{{\tilde s_{{x_i}}^2(t) + \tilde s_{{y_i}}^2(t)}}} & {\frac{{{{\tilde s}_{{x_i}}}(t)}}{{\tilde s_{{x_i}}^2(t) + \tilde s_{{y_i}}^2(t)}}} & 1  \\
\end{array}} \right]_{{\rm{x}}(t) = {{\rm{x}}^ * }}} \\
 {C_{{J_1}}}(t) \buildrel \Delta \over = {\rm{col}}\{ {D_{{J_1}}}(t),{D_{{J_2}}}(t)\}  \\
 {C_{{J_2}}}(t) \buildrel \Delta \over = {\rm{col}}\{ {D_{{J_3}}}(t),{D_{{J_4}}}(t)\}  \\
 \end{array} \right.
\label {eq:s11}
\end{eqnarray}
where $i=1,2,3,4$, ${{\tilde s}_{{x_i}}}(t) \buildrel \Delta \over = {s_{{x_i}}} - {s_x}(t)$ and ${{\tilde s}_{{y_i}}}(t) \buildrel \Delta \over = {s_{{y_i}}} - {s_y}(t)$.

In the simulation, the parameters ${T_0}$, ${u_p}$ and ${u_r}$ are taken as ${T_0} = 1$, ${u_p} = 0.075$ and ${u_r} = 0.025$, while four landmarks' positions in X-Y plan are $({s_{{x_1}}},{s_{{y_1}}}) = (5,10)$, $({s_{{x_2}}},{s_{{y_2}}}) = (10,10)$, $({s_{{x_3}}},{s_{{y_3}}}) = (10,5)$ and $({s_{{x_4}}},{s_{{y_4}}}) = (5,5)$. Meanwhile, it is reasonably considered that the initial robot's pose is known in advance, hence the initial estimation errors ${{{\rm{\tilde x}}}_1}(0)$ and ${{{\rm{\tilde x}}}_2}(0)$ can be given by ${{{\rm{\tilde x}}}_1}(0) = {{{\rm{\tilde x}}}_2}(0) = 0$. Furthermore, the disturbance noises ${w_p}(t)$, ${w_r}(t)$, ${w_\theta }(t)$ and $v_k(t)\;(k=1,2,3)$ in (\ref{eq:s4}--\ref{eq:s5}) are taken as follows:
\begin{itemize}
\item \textbf{Type IV:} ${w_p}(t)$, ${w_r}(t)$, ${w_\theta }(t)$ and $v_k(t)\;(k=1,2,3)$ are the bounded noises given by
\begin{eqnarray}
\left\{ \begin{array}{l}
 {w_p}(t) = 0.2{\rho _p}(t) - 0.1,{w_r}(t) = 0.3{\rho _r}(t) - 0.1 \\
 {w_\theta }(t) = 0.2{\rho _\theta }(t) - 0.1 \\
 {v_1}(t) = {\rm{col}}\{ 0.05{\rho _{{v_1}}}(t) - 0.01,0.02{\rho _{{v_2}}}(t) - 0.01\}  \\
 {v_2}(t) = {\rm{col}}\{ 0.03{\rho _{{v_3}}}(t) - 0.01,0.05{\rho _{{v_4}}}(t) - 0.03\}  \\
 {v_3}(t) = {\rm{col}}\{ 0.02{\rho _{{v_5}}}(t) - 0.01,0.06{\rho _{{v_6}}}(t) - 0.02\}  \\
 \end{array} \right.\;\;\;
\end{eqnarray}
where ${\rho _p}(t)( \in [0,1])$, ${\rho _r}(t)( \in [0,1])$, ${\rho _\theta }(t)( \in [0,1])$ and
${\rho _{{v_l}}}(t)( \in [0,1])\;(l=1,2,3,4,5,6)$ are random variables that can be generated by the function ``rand'' of MATLAB. Under this case, $w(t)$ in (\ref{eq:s5}) is also the bounded noise.
\end{itemize}

To demonstrate the effectiveness, by implementing Algorithm 2,
the actual robot trajectory in the X-Y plan and its position estimation are plotted in Fig. 2(a) under Type IV. It is seen from this figures that the mobile robot can get its position well by using the nonlinear fusion estimation algorithm in this paper. Meanwhile, the SEs of the DFE and LSEs are depicted in Fig. 2(c) under Type IV, which shows that the estimation performance of the DFE is better than that of each LSE. This is as expected for the fusion estimation methods. On the other hand, when the optimization problem (\ref{eq:30}) is solvable at each time, there must be ${J_{{d_j}}}(t) \buildrel \Delta \over = ||{\rm{G}}_{{{\rm K}_j}}^N(t){A_{{J_j}}}(t - 1)|{|_2} < 1$, which has been illustrated by Fig. 2(b).

\begin{figure}[thpb]
\centering
\includegraphics[height=6.5cm, width=8.5cm]{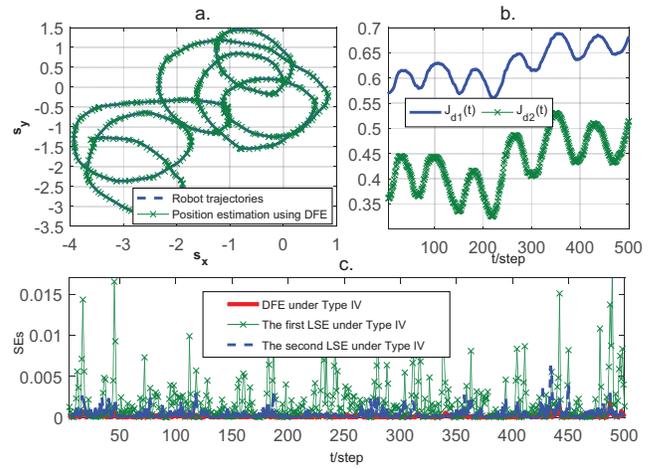}
 \caption{Nonlinear fusion estimation under Type IV: (a) The robot's trajectory and its position estimation by using Algorithm 2; (b) Trajectories of ${J_{{d_j}}}(t)(j=1,2)$; (c) Comparison of the estimation precision for the DFE and LSEs under Type IV.}
 \label{fig1}
 \end{figure}
\begin{figure}[thpb]
\centering
\includegraphics[height=6.0cm, width=8.5cm]{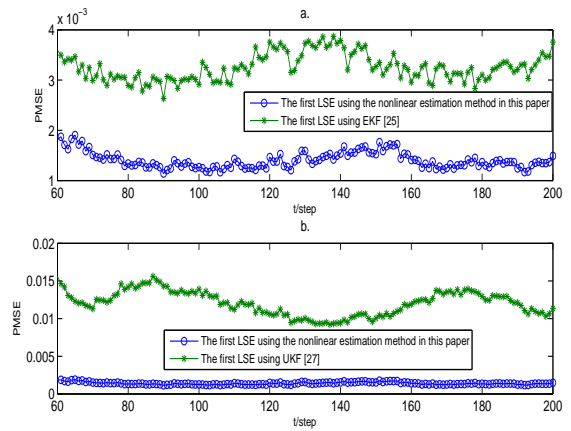}
 \caption{Under Type IV, the comparison of the first LSE's estimation performance by using the estimation method of this paper, EKF method\cite{c26} and UKF method\cite{c28}, respectively.}
 \label{fig1}
\end{figure}

To demonstrate the advantages of the developed nonlinear estimation method, it will be compared with the classical extended kalman filter (EKF) method \cite{c26} and unscented kalman filter (UKF) method \cite{c28,c29}, where these methods are all applicable to nonlinear systems with \emph{Gaussian white noises}, and the UKF  in \cite{c28} can also deal with state-dependent noises by using augmentation strategy. Here, the performance of the first LSE is used to show the advantages of the proposed method. Due to the random noises in Type IV, the estimation performance of the LSE is assessed by the practical mean square errors (PMSEs) that are calculated by Monte Carlo method \cite{c1} with an average of 500 runs. Then, the PMSEs of the first LSE using different methods are plotted in Fig. 3. It is seen from this figure that the estimation precision of the proposed method is higher than that of EKF or UKF method. This is because the disturbance noises are bounded, and no statistical information is available, but the EKF and UKF are all used to deal with the case of Gaussian white noises with known covariances. Notice that, as compared with Gaussian white noise with known covariance, the condition ``bounded noise" is easier to be satisfied in a practical system. This implies that the nonlinear estimation method in this paper is more robust as compared with the classic EKF and UKF methods.

\section{CONCLUSIONS}
In this paper, a new method to distributed fusion estimation problem has been developed for linear time-varying systems and nonlinear systems with bounded noises. When considering linear time-varying fusion systems, each local Kalman-like estimator with time-varying gains was designed such that the SE of each LSE must be bounded as time goes to $\infty $, and a novel distributed fusion estimation criterion was designed by establishing a convex optimization problem. Furthermore, the general nonlinear systems were transformed to linear time-varying systems by using Taylor series expansion, and the linearized errors could be viewed as unknown but bounded noises. Under this case, different convex optimization problems on the designs of the nonlinear estimator and distributed fusion criterion were established in terms of linear matrix inequalities for nonlinear fusion systems with bounded noises. Moreover, the solutions to the convex optimization problems in this paper can be directly obtained by using the Matlab LMI Toolbox. Finally, two illustrative examples were exploited to demonstrate the advantages and effectiveness of the proposed fusion estimation methods.

More recently, a great deal of attention has focused on the networked multi-sensor fusion estimation problem, where the sensor messages are transmitted to the fusion center through communication networks \cite{c3,c18,c20}. Therefore, when considering the communication uncertainties including bandwidth constraints, transmission delays and packet dropouts, one of our future works will focus on how to design the networked nonlinear fusion estimation algorithms based on the developed fusion estimation method in this paper.


\end{document}